\begin{document}
\newtheorem{thm}{Theorem}[section]
\newtheorem{lem}[thm]{Lemma}
\newtheorem{prop}[thm]{Proposition}
\newtheorem{cor}[thm]{Corollary}
\newtheorem{assum}[thm]{Assumption}
\newtheorem{rem}[thm]{Remark}
\newtheorem{defn}[thm]{Definition}
\newtheorem{clm}[thm]{Claim}
\newcommand{\lv}{\left \vert}
\newcommand{\rv}{\right \vert}
\newcommand{\lV}{\left \Vert}
\newcommand{\rV}{\right \Vert}
\newcommand{\la}{\left \langle}
\newcommand{\ra}{\right \rangle}
\newcommand{\ltm}{\left \{}
\newcommand{\rtm}{\right \}}
\newcommand{\lcm}{\left [}
\newcommand{\rcm}{\right ]}
\newcommand{\ket}[1]{\lv #1 \ra}
\newcommand{\bra}[1]{\la #1 \rv}
\newcommand{\lmk}{\left (}
\newcommand{\rmk}{\right )}
\newcommand{\al}{{\mathcal A}}
\newcommand{\md}{M_d({\mathbb C})}
\newcommand{\ali}[1]{{\mathfrak A}_{[ #1 ,\infty)}}
\newcommand{\alm}[1]{{\mathfrak A}_{(-\infty, #1 ]}}
\newcommand{\nn}[1]{\lV #1 \rV}
\newcommand{\br}{{\mathbb R}}
\newcommand{\dm}{{\rm dom}\mu}
\newcommand{\Ad}{\mathop{\mathrm{Ad}}\nolimits}
\newcommand{\Proj}{\mathop{\mathrm{Proj}}\nolimits}
\newcommand{\RRe}{\mathop{\mathrm{Re}}\nolimits}
\newcommand{\RIm}{\mathop{\mathrm{Im}}\nolimits}
\newcommand{\Tr}{\mathop{\mathrm{Tr}}\nolimits}
\newcommand{\Mat}{\mathop{\mathrm{Mat}}\nolimits}
\newcommand{\zin}{\mathbb{Z}}
\newcommand{\rr}{\mathbb{R}}
\newcommand{\cc}{\mathbb{C}}
\newcommand{\nan}{\mathbb{N}}
\newcommand{\wks}{\mathop{\mathrm{wk^*-}}\nolimits}
\def\qed{{\unskip\nobreak\hfil\penalty50
\hskip2em\hbox{}\nobreak\hfil$\square$
\parfillskip=0pt \finalhyphendemerits=0\par}\medskip}
\def\proof{\trivlist \item[\hskip \labelsep{\bf Proof.\ }]}
\def\endproof{\null\hfill\qed\endtrivlist\noindent}
\def\proofof[#1]{\trivlist \item[\hskip \labelsep{\bf Proof of #1.\ }]}
\def\endproofof{\null\hfill\qed\endtrivlist\noindent}

\title{The Shannon-McMillan Theorem for AF $C^*$-systems}
\author{Yoshiko Ogata\thanks
{Graduate School of Mathematics, University of Tokyo, Japan}}
\maketitle{}
\begin{abstract}We give a new proof of quantum
Shannon-McMillan theorem, extending it 
to AF $C^*$-systems.
Our proof is based on the variational principle, 
instead of the classical
Shannon-McMillan theorem.
\end{abstract}
\section{Introduction}
The classical Shannon-McMillan Theorem \cite{sha}
states that an ergodic system has {\it typical sets}
satisfying the
 asymptotic equipartition property.
This theorem demonstrates 
the significance of the entropy which
gives the {\it size} of
the typical sets. 
There has recently been great progress
in the quantum version of the
Shannon-McMillan Theorem 
\cite{hp2}, \cite{ns}, \cite{qsm}, \cite{bs}.
In particular, Bjelakovi\'{c} et al. \cite{qsm} proved 
 Shannon-McMillan theorem for ergodic 
quantum spin systems.
The analyses in \cite{hp2}, \cite{ns}, 
\cite{qsm}, \cite{bs} are based on the classical 
Shannon McMillan theorem. The quantum mean entropy
of a translation invariant state 
can be approximated by classical mean entropies of
its restriction to some abelian subalgebras.
This fact enables us to reduce the problem 
to the classical one. 

In this paper, we present an alternative
proof of quantum Shannon-McMillan theorem.
Our proof is based on the variational principle, 
which is a well-known
thermodynamic property of quantum spin systems.
Roughly speaking, the variational principle 
enables us to estimate
{\it rank} of support projections of ergodic states,
in terms of the mean entropy.
By virtue of this estimate, we are able to
prove quantum Shannon-McMillan theorem directly, without
relying on the classical version of it.
Using this argument, we extend the quantum Shannon-McMillan theorem to
 AF $C^*$-systems.
Our proof applies to any dynamical system which admits
 thermodynamical formalism. In particular, we can apply it
to quantum spin systems
on $\zin^\nu$-lattice with $\nu\ge 2$.
However, for the sake simplicity, in this paper,
we present the result
for $\zin$-action.

We consider quadruples $(A,\{A_{[n,m]}\}_{n\le m},\tau, \gamma)$,
$n,m\in{\mathbb Z}$, where $A$ is a unital $C^*$-algebra,
$A_{[n,m]}$ a finite-dimensional $C^*$-subalgebra of
$A$ with $1_A\in A_{[n,m]}$, $\tau$ a faithful trace state of $A$, and
$\gamma$ a $\tau$-preserving automorphism of $A$.
For any subset $\Lambda$ of $\mathbb Z$,
we denote by $A_{\Lambda}$ the $C^*$-subalgebra of
$A$ generated by $A_{[k,n]}, [k,n]\subset \Lambda$, 
and write $A_n$ instead of $A_{[0,n-1]}$.
We denote the set of all self-adjoint elements of $A$ by
$A_{\rm sa}$ and the set of all $\gamma$-invariant states
on $A$ by ${\cal S}_{\gamma}$.
For a family of intervals $\{I_{\alpha}\}_{\alpha}$ in $\zin$, 
we will write $\bigvee_{\alpha}A_{I_{\alpha}}$ for
the $C^*$-subalgebra generated by 
$\{A_{I_{\alpha}}\}_{\alpha}$.
If $B$ is a finite-dimensional $C^*$-subalgebra of $A$, its
canonical trace is denoted by $\Tr_B$. The restriction of a state
$\varphi$ on $A$ to $B$ is written $\varphi\vert_B$.
For a state $\psi$ on $B$, we denote its density matrix 
by $D_{\psi}$, and von Neumann entropy by $S(\psi)$.
For positive linear functionals $\psi_1,\psi_2$ on $B$,
the relative entropy is denoted by $S(\psi_1,\psi_2)$.
Throughout the paper we suppose that the following 
conditions are satisfied.
\begin{assum}\label{nine}
\begin{description}
\item[(i)]There exists $n_0>0$ such that 
$A_{(-\infty,0]}$ and $A_{[n_0,\infty)}$ commute;
\item[(ii)]$\tau(ab)=\tau(a)\tau(b)$ for 
$a\in A_{(-\infty,0]}$, $b\in A_{[n_0,\infty)}$;
\item[(iii)]$A_{[n',m']}\subset A_{[n'',m'']}$, for
$n''\le n'\le m'\le m''$;
\item[(iv)]$\bigcup_{n}A_{[-n,n]}$
is dense in $A$;
\item[(v)]$\gamma(A_{[n,m]})=A_{[n+1,m+1]}$
for all $n\le m$, $n,m\in{\mathbb Z}$;
\item[(vi)]for the density matrix $D_{\tau{\vert}_{A_n}}$ the limit
$\lambda_{\tau}:=\lim_{n\to\infty}-\frac 1 n \log D_{\tau{\vert}_{A_n}}$ exists in norm, 
and is a scalar.
\end{description}
\end{assum}
This class of systems were studied in \cite{hp},
\cite{gn}. Clearly, quantum spin chains belong to the class.
See \cite{hp},
\cite{gn} for the other examples.

For such a system, the mean entropy exists.  
\begin{prop}\label{me}
For any $\gamma$-invariant
state $\omega$, the limit 
\[
s(\omega):=\lim_{n}\frac1n S(\omega\vert_{A_n})
\]
exists in $[0, \lambda_{\tau}]$.
The function ${\cal S}_{\gamma}\ni \omega 
\mapsto s(\omega)\in [0, \lambda_{\tau}]$ is
affine and weakly$^*$ upper semicontinuous.
\end{prop}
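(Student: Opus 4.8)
The plan is to deduce everything from an approximate subadditivity of the sequence $a_n:=S(\omega\vert_{A_n})$ for a fixed $\gamma$-invariant state $\omega$. \emph{Step 1 (approximate subadditivity).} Fix $n,m$ and insert a buffer of width $n_0$: set $B_1:=A_{[0,n-1]}=A_n$, $\Delta:=A_{[n,n+n_0-1]}$, and $B_2:=A_{[n+n_0,n+n_0+m-1]}$. By Assumption \ref{nine}(i),(v), $B_1\subset A_{(-\infty,n-1]}$ commutes with $B_2\subset A_{[n+n_0,\infty)}$, and by (iii)--(iv) the three algebras generate $A_{[0,n+m+n_0-1]}=A_{n+m+n_0}$ with $\Delta$ a tensor complement of $B_1\vee B_2$. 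Subadditivity of the von Neumann entropy for the commuting pair gives $S(\omega\vert_{B_1\vee B_2})\le S(\omega\vert_{B_1})+S(\omega\vert_{B_2})$, while adjoining $\Delta$ changes the entropy by at most a constant $K=K(n_0)$ (the Araki--Lieb bound, the buffer width being independent of $n,m$). Since $B_2=\gamma^{\,n+n_0}(A_m)$ and $\gamma$ is a $\tau$-preserving $*$-automorphism with $\omega\circ\gamma=\omega$, we have $S(\omega\vert_{B_2})=S(\omega\vert_{A_m})$, whence
\[
a_{n+m+n_0}\le a_n+a_m+K .
\]

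\emph{Step 2 (existence, bounds, and an infimum formula).} Iterating the last inequality over $q$ blocks of length $N$ separated by buffers of width $n_0$ yields $a_{qL-n_0}\le q\,a_N+(q-1)K$ with $L:=N+n_0$; dividing by $qL-n_0$, letting $q\to\infty$, and absorbing the remainders by the single-site bound $\lv a_{n+1}-a_n\rv\le\log\dim A_{[0,0]}$, we get $\limsup_n a_n/n\le (a_N+K)/(N+n_0)$ for every $N$. Since $(a_n+K)/(n+n_0)-a_n/n\to0$ we also have $\inf_N(a_N+K)/(N+n_0)\le\liminf_n a_n/n$; combining with the previous bound shows the limit exists and
\[
s(\omega)=\inf_{N}\frac{S(\omega\vert_{A_N})+K}{N+n_0}.
\]
Nonnegativity of $S$ gives $s(\omega)\ge0$, while nonnegativity of relative entropy gives $a_n\le-\Tr(D_{\omega\vert_{A_n}}\log D_{\tau\vert_{A_n}})$, and normalizing by $n$ and using Assumption \ref{nine}(vi) (so that $-\tfrac1n\log D_{\tau\vert_{A_n}}\to\lambda_{\tau}1$ in norm) yields $s(\omega)\le\lambda_{\tau}$.

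\emph{Step 3 (affineness and upper semicontinuity).} For states $\psi_1,\psi_2$ on the finite-dimensional algebra $A_n$ and $\lambda\in[0,1]$, concavity of entropy together with the mixing bound gives
\[
\lambda S(\psi_1)+(1-\lambda)S(\psi_2)\le S(\lambda\psi_1+(1-\lambda)\psi_2)\le \lambda S(\psi_1)+(1-\lambda)S(\psi_2)+\log2 .
\]
Applying this to $\omega_i\vert_{A_n}$, dividing by $n$ and letting $n\to\infty$ kills the $\log2$ term, so $s$ is affine on ${\cal S}_{\gamma}$. For upper semicontinuity, observe that for each fixed $N$ the map $\omega\mapsto\omega\vert_{A_N}$ is weak$^*$-continuous into the state space of the finite-dimensional algebra $A_N$, on which $\psi\mapsto S(\psi)$ is continuous; hence each term in the infimum formula is weak$^*$-continuous, and $s$, being an infimum of weak$^*$-continuous functions, is weak$^*$ upper semicontinuous.

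The main obstacle is Step 1: arranging the buffer so that the commutation in Assumption \ref{nine}(i),(v) applies, verifying that the contiguous local algebras generate $A_{n+m+n_0}$ with $\Delta$ a genuine tensor complement, and controlling the entropy correction \emph{uniformly} in $n,m$. Once the almost-subadditive recursion is promoted to the infimum formula, affineness and upper semicontinuity follow routinely.
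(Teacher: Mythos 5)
The gap is in Step~1, and it sits exactly at the point where AF $C^*$-systems differ from quantum spin chains. You assert that $B_1=A_{[0,n-1]}$, $\Delta=A_{[n,n+n_0-1]}$ and $B_2=A_{[n+n_0,n+n_0+m-1]}$ generate $A_{n+m+n_0}$, with $\Delta$ a tensor complement of $B_1\vee B_2$. Assumption \ref{nine} gives only the inclusion $B_1\vee\Delta\vee B_2\subset A_{n+m+n_0}$ (by (iii)), and this inclusion is strict in the motivating AF examples: in the GICAR system (gauge-invariant CAR, which satisfies Assumption \ref{nine} with $n_0=1$), $A_{[0,0]}\vee A_{[1,1]}$ is the four-dimensional abelian algebra generated by the number operators $a_0^*a_0,\,a_1^*a_1$, whereas $A_{[0,1]}\cong\cc\oplus M_2(\cc)\oplus\cc$ also contains $a_0^*a_1$. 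Once the inclusion is strict, the argument cannot be repaired within von Neumann entropy, because von Neumann entropy is not monotone under restriction to a subalgebra in either direction (a pure entangled state has entropy $0$ but marginals of positive entropy; a product of mixed states loses entropy under restriction). So neither your subadditivity estimate for the commuting pair nor the Araki--Lieb buffer bound (which also presupposes a tensor complement) ever reaches the quantity $a_{n+m+n_0}=S(\omega\vert_{A_{n+m+n_0}})$ that you need to control; the same objection applies to the bound $\lv a_{n+1}-a_n\rv\le\log\dim A_{[0,0]}$ invoked in Step~2.

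The paper's proof is built around precisely this obstruction. Using Assumption \ref{nine}(vi), it replaces $S(\omega\vert_{A_n})$ by the relative entropy, $-S(\omega\vert_{A_n},\tau\vert_{A_n})=S(\omega\vert_{A_n})+\omega(\log D_{\tau\vert_{A_n}})$, and relative entropy, unlike von Neumann entropy, \emph{is} monotone under restriction. This lets one simply discard both the buffers and the possibly missing part of $A_n$: one restricts to the disjoint blocks $\bigvee_j A_{I_{m,j}}$ separated by gaps of width $n_0$, with no correction term at all. Then Assumptions \ref{nine}(i),(ii) identify this block algebra with $\bigotimes A_m$ carrying the product trace $\bigotimes\tau\vert_{A_m}$ (this is where the product property of $\tau$, which you never invoke, is essential), and subadditivity of von Neumann entropy against the product second argument yields superadditivity of the relative entropy over blocks; Lemma 1.1.2 of \cite{nsb} then gives the limit together with an infimum formula, from which your Step~3 (affinity, upper semicontinuity, and the bounds $0\le s(\omega)\le\lambda_\tau$) goes through essentially as you wrote it. In short, your Steps~2--3 are the standard routine, but Step~1 is the spin-chain argument, and it genuinely fails in the AF setting this proposition is about.
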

This system also satisfies the variational principle.
\begin{prop}\label{vp}
Let $a$ be a self-adjoint element in $A$. Then the limit
\[
P_{\gamma}^{\tau}(a):=\lim_{n\to\infty} \frac 1n \log 
\tau(e^{-\sum_{i=0}^{n-1}\gamma_i(a)})\] exists and
\[
P_{\gamma}^{\tau}(a)
=\sup_{\omega\in {\cal S}_{\gamma}}\left\{
s(\omega)-\omega(a)\right\}-\lambda_{\tau}.
\]
\end{prop}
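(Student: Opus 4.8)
The plan is to combine the exact finite-volume Gibbs variational principle with the two structural inputs of the excerpt: Assumption \ref{nine}(vi), which lets me trade the trace $\tau$ for the canonical trace at the cost of the density $\lambda_{\tau}$, and Proposition \ref{me}, which supplies convergence, affineness and weak$^*$ upper semicontinuity of the mean entropy $s$. First I would reduce to a local observable. Writing $P_n(a):=\frac1n\log\tau(e^{-\sum_{i=0}^{n-1}\gamma_i(a)})$, the standard estimate $\lv\log\tau(e^{-X})-\log\tau(e^{-Y})\rv\le\nn{X-Y}$ gives $\lv P_n(a)-P_n(b)\rv\le\nn{a-b}$ uniformly in $n$, and the right-hand side $\sup_{\omega\in{\cal S}_{\gamma}}\ltm s(\omega)-\omega(a)\rtm$ is likewise $1$-Lipschitz in $a$; hence, by Assumption \ref{nine}(iv), it suffices to establish the identity (and existence of the limit) for $a$ in the dense local subalgebra, say $a\in A_{[0,r-1]}$ after a translation.

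Next I would strip off $\tau$. Put $H_n:=\sum_{i=0}^{n-1}\gamma_i(a)\in B_n:=A_{n+r-1}$ and let $\Tr_{B_n}$ be its canonical trace. Since $\tau(e^{-H_n})=\Tr_{B_n}(D_{\tau\vert_{B_n}}e^{-H_n})$ and, by Assumption \ref{nine}(vi), $\nn{-\frac1{|B_n|}\log D_{\tau\vert_{B_n}}-\lambda_{\tau}}\to0$ with $|B_n|=n+r-1$, sandwiching $D_{\tau\vert_{B_n}}$ between scalar multiples of the identity yields $e^{-|B_n|(\lambda_{\tau}+\epsilon_n)}\Tr_{B_n}(e^{-H_n})\le\tau(e^{-H_n})\le e^{-|B_n|(\lambda_{\tau}-\epsilon_n)}\Tr_{B_n}(e^{-H_n})$ with $\epsilon_n\to0$. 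Dividing by $n$ and letting $n\to\infty$ (using $|B_n|/n\to1$), the problem reduces to showing $\lim_n\frac1n\log\Tr_{B_n}(e^{-H_n})=\sup_{\omega\in{\cal S}_{\gamma}}\ltm s(\omega)-\omega(a)\rtm$, after which undoing the reduction reinstates the $-\lambda_{\tau}$.

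For this I would prove matching bounds. The lower bound uses the exact finite-volume Gibbs principle $\log\Tr_{B_n}(e^{-h})=\sup_{\psi}\ltm S(\psi)-\psi(h)\rtm$ on $B_n$: testing with $\psi=\omega\vert_{B_n}$ for a fixed $\gamma$-invariant $\omega$ and using $\omega(\gamma_i(a))=\omega(a)$ gives $\frac1n\log\Tr_{B_n}(e^{-H_n})\ge\frac1n S(\omega\vert_{B_n})-\omega(a)$, whose $\liminf$ is $s(\omega)-\omega(a)$ by Proposition \ref{me}; the supremum over $\omega$ gives $\liminf\ge\sup$. The upper bound is the substantial step: let $\psi_n$ be the Gibbs state on $B_n$, so that $\frac1n\log\Tr_{B_n}(e^{-H_n})=\frac1nS(\psi_n)-\frac1n\psi_n(H_n)$ exactly, and I would manufacture a $\gamma$-invariant state dominating this. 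Concretely I would cut $\psi_n$ to $A_n$, form the periodic state $\mu_n$ obtained by tensoring translated copies across blocks separated by a gap of width $n_0$ — here Assumption \ref{nine}(i),(ii) are exactly what make distant blocks commute and the reference trace factorize, so that entropy is additive over blocks — and then symmetrize, $\omega_n:=\frac1n\sum_{k=0}^{n-1}\mu_n\circ\gamma_k$, which is $\gamma$-invariant. Subadditivity of von Neumann entropy together with the block decoupling gives $s(\omega_n)\ge\frac1nS(\psi_n)-o(1)$, while the energy term satisfies $\omega_n(a)=\frac1n\psi_n(H_n)+O(1/n)$ up to boundary corrections; hence $s(\omega_n)-\omega_n(a)\ge\frac1n\log\Tr_{B_n}(e^{-H_n})-o(1)$. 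Passing to a weak$^*$ limit point $\omega$ of $(\omega_n)$ and invoking weak$^*$ upper semicontinuity of $s$ and weak$^*$ continuity of $\omega\mapsto\omega(a)$ (Proposition \ref{me}) yields $s(\omega)-\omega(a)\ge\limsup_n\frac1n\log\Tr_{B_n}(e^{-H_n})$, so $\sup\ge\limsup$. The two bounds together establish existence of the limit and the asserted formula.

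I expect the \emph{main obstacle} to be the upper bound, specifically the entropy estimate $s(\omega_n)\ge\frac1nS(\psi_n)-o(1)$: one must show that the overlap of length $r$ between successive $H_n$-blocks and the decoupling gaps of width $n_0$ together contribute only $o(n)$ to the entropy, which is precisely where Assumption \ref{nine}(i),(ii) and the subadditivity and concavity of $S$ are indispensable.
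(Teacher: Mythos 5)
Your reduction to local $a$, the stripping of $\tau$ via Assumption \ref{nine}(vi), and the lower bound $\liminf_n\frac1n\log\Tr_{B_n}(e^{-H_n})\ge\sup_{\omega\in{\cal S}_\gamma}\{s(\omega)-\omega(a)\}$ are all sound, and they essentially match the paper (the paper phrases the lower bound as positivity of the relative entropy $S(\omega\vert_{A_n},\,\tau(e^{-\sum_i\gamma_i(a)}\cdot)/\tau(e^{-\sum_i\gamma_i(a)})\vert_{A_n})$, which is the same computation as your finite-volume Gibbs principle). The gap is exactly where you predicted it, but it is not a bookkeeping issue about overlaps: your construction of $\mu_n$ and the estimate $s(\omega_n)\ge\frac1nS(\psi_n)-o(1)$ do not go through in the AF setting. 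First, ``tensoring translated copies across blocks separated by gaps of width $n_0$'' only defines a state on the proper $C^*$-subalgebra $\bigvee_jA_{I_{n,j}}$ generated by the blocks: the gap algebras do \emph{not} commute with the adjacent block algebras (Assumption \ref{nine}(i) gives commutation only at distance $n_0$), and nothing in the assumptions says that $A_{[0,N-1]}$ is generated by --- let alone tensor-factorizes over --- the blocks and the gaps. So $\mu_n$ must be extended to $A$ by Hahn--Banach (plus an averaging to retain periodicity), and that extension is completely uncontrolled on the large local algebras $A_N$ whose entropies define $s(\omega_n)$. Second, even granting an extension, the inequality you need is a \emph{lower} bound on $S(\omega_n\vert_{A_N})$ in terms of the known block marginals, and none of your cited tools provides it: subadditivity bounds entropy from above, not below; von Neumann entropy is not monotone under restriction to a subalgebra (a pure entangled state has zero entropy but highly mixed marginals), so you cannot pass from $A_N$ down to the block subalgebra; and the Araki--Lieb triangle inequality, which rescues this step for genuine spin chains, requires the tensor splitting $A_N\cong(\mathrm{blocks})\otimes(\mathrm{gaps})$ that is unavailable here. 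The same objection already applies to your preliminary cut of $\psi_n$ from $B_n$ down to $A_n$.

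The paper is structured precisely to avoid this trap. Instead of cutting a Gibbs state and re-tensoring, it fixes a block length $m$ (a free parameter, \emph{not} tied to the volume) and defines $\varphi_{m,a}$ as the weak$^*$ limit of the perturbed-trace functionals $\tau(e^{-\sum_j\sum_i\gamma_{j(m+n_0)+i}(a)}\,\cdot\,)/\tau(e^{-\sum_j\sum_i\gamma_{j(m+n_0)+i}(a)})$; this state is globally defined on $A$ from the outset, and by Assumption \ref{nine}(i),(ii) its restriction to every block-aligned window is an \emph{explicit} product Gibbs state. The entropy lower bound then comes not from subadditivity but from positivity plus monotonicity of relative entropy against the connected Gibbs state $\tau(e^{-\sum_k\gamma_k(a)}\,\cdot\,)/\tau(e^{-\sum_k\gamma_k(a)})$: monotonicity lets one pass \emph{up} to a block-aligned window where both states are explicit, and there the relative entropy of two Gibbs states is at most twice the norm of the difference of their Hamiltonians, i.e.\ $2\lV a\rV$ times the number of interaction terms dropped near the gaps --- about $n_0+l$ per period of length $m+n_0$. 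This yields, for the honest $\gamma$-invariant average $\psi_{m,a}$, the bound $s(\psi_{m,a})\ge P^{\tau}_{\gamma}(a)+\lambda_{\tau}+\psi_{m,a}(a)-2\lV a\rV(n_0+l)/(m+n_0)$, and letting $m\to\infty$ finishes the proof; no weak$^*$ limit point, no upper semicontinuity of $s$, and no entropy estimate for an arbitrary state extension is ever needed. (A minor further slip: your $\omega_n=\frac1n\sum_{k=0}^{n-1}\mu_n\circ\gamma_k$ is not $\gamma$-invariant; a period-$(n+n_0)$ state must be averaged over $n+n_0$ translates.)
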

Proposition \ref{me} and \ref{vp} are well-known for quantum spin systems.
(See \cite{BR96}). For AF $C^*$-systems, $n_0=1$ case  was shown in 
\cite{hp}.
We give a proof for the general case in Section \ref{mevp}.

In this paper, we give a proof of 
quantum Shannon-McMillan Theorem in AF $C^*$-systems
via the variational principle, without relying on the classical
Shannon-McMillan Theorem.
\begin{thm}\label{main}
Let $(A,\{A_{[n,m]}\}_{n\le m},\tau, \gamma)$ be an AF
$C^*$-system satisfying Assumption \ref{nine}.
Then for any $\gamma$-ergodic state $\omega$ on $A$ and $\delta>0$, 
there exists a 
sequence of projections $\{p_n\}_{n\in {\mathbb N}}$ 
in $A$ with $p_n\in A_n$,
such that
\begin{description}
\item[(i)]$\lim_{n\to\infty}\omega(p_n)=1$;
\item[(ii)] for all minimal projections $e\in A_n$ with
$e\le p_n$, 
\[
e^{-n\lmk s(\omega)+\delta \rmk}\le
\omega(e)\le e^{-n\lmk s(\omega)-\delta \rmk};
\]
\item[(iii)]for $n$ large enough,
\[
e^{n(s(\omega)-\delta)}\le
Tr_{A_n}p_n\le e^{n(s(\omega)+\delta)}.
\]
\end{description}
\end{thm}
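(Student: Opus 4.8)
The plan is to deduce all three assertions from a single concentration statement for the observables $X_n:=-\frac1n\log D_n$, where $D_n:=D_{\omega\vert_{A_n}}$. Put $s:=s(\omega)$ and take $p_n$ to be the spectral projection of $D_n$ onto eigenvalues in $[e^{-n(s+\delta)},e^{-n(s-\delta)}]$, i.e. $p_n=\chi_{\{|X_n-s|\le\delta\}}(D_n)\in A_n$ (replacing $\delta$ by a slightly smaller constant so as to recover the exact exponents in (iii)). With this choice (ii) and (iii) become purely spectral. If $e\le p_n$ is a minimal projection of $A_n$, then $e$ is rank one with range spanned by a unit vector $\xi$ in the range of $p_n$, so $\omega(e)=\Tr_{A_n}(D_n e)=\langle\xi\vert D_n\vert\xi\rangle$ is a convex combination of eigenvalues of $D_n$ lying in $[e^{-n(s+\delta)},e^{-n(s-\delta)}]$, which is (ii); and since every eigenvalue of $D_n$ in the range of $p_n$ lies in that interval, $e^{-n(s+\delta)}\Tr_{A_n}p_n\le\omega(p_n)\le e^{-n(s-\delta)}\Tr_{A_n}p_n$, whence (iii) follows the moment we know $\omega(p_n)\to1$. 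So the whole theorem reduces to (i), that is, to $\omega(\chi_{\{|X_n-s|>\delta\}}(D_n))\to0$.

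The first observation is that the $\omega$-mean of $X_n$ is already correct: $\omega(X_n)=-\frac1n\Tr_{A_n}(D_n\log D_n)=\frac1n S(\omega\vert_{A_n})\to s$ by Proposition \ref{me}. Hence (i) follows once the fluctuations of $X_n$ about its mean are shown to vanish, and it is here that both ergodicity and the variational principle enter (for a non-ergodic $\omega$ the fluctuations persist, since $X_n$ concentrates at different values on different ergodic components). I would approximate $\log D_n$ by an ergodic sum. Since by Proposition \ref{me} the map $\omega'\mapsto s(\omega')$ is affine and weakly$^*$ upper semicontinuous, convex duality applied to the variational formula of Proposition \ref{vp}, $P_\gamma^\tau(a)+\lambda_\tau=\sup_{\omega'\in\mathcal S_\gamma}\{s(\omega')-\omega'(a)\}$, yields for each $\varepsilon>0$ a self-adjoint $a$ with $s(\omega)-\omega(a)>P_\gamma^\tau(a)+\lambda_\tau-\varepsilon$; one may take $a$ local by density and continuity of the pressure. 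Writing $H_n:=\sum_{i=0}^{n-1}\gamma_i(a)$ and $\rho_n:=e^{-H_n}/\Tr e^{-H_n}$, this says exactly that the per-site relative entropy $\frac1n S(\omega\vert_{A_n},\rho_n)=\frac1n\big(\omega(H_n)+\log\Tr e^{-H_n}\big)-\frac1n S(\omega\vert_{A_n})$ is eventually below $\varepsilon$, using $\frac1n\omega(H_n)=\omega(a)$ and $\frac1n\log\Tr e^{-H_n}\to P_\gamma^\tau(a)+\lambda_\tau$.

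Now $-\frac1n\log D_{\rho_n}=\frac1n H_n+\frac1n\log\Tr e^{-H_n}$, and $\gamma$-ergodicity of $\omega$ enters through the mean ergodic theorem: since ergodicity makes $\mathbb C\Omega$ the full $\gamma$-invariant subspace of the GNS space, $\frac1n H_n=\frac1n\sum_{i=0}^{n-1}\gamma_i(a)\to\omega(a)\,1$ in $L^2(\omega)$, so the fluctuations of $\frac1n H_n$ under $\omega$ vanish. Thus $-\frac1n\log D_{\rho_n}$ concentrates, under $\omega$, at $\omega(a)+P_\gamma^\tau(a)+\lambda_\tau$, a value within $\varepsilon$ of $s$. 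The remaining step is to transfer this concentration from $\rho_n$ to $D_n$: since $\frac1n S(\omega\vert_{A_n},\rho_n)<\varepsilon<\delta$, a Markov/Chebyshev exponential estimate should bound $\omega(\chi_{\{X_n>s+\delta\}}(D_n))$ and $\omega(\chi_{\{X_n<s-\delta\}}(D_n))$ by exponentially small quantities, the $\varepsilon$-gap absorbing the relative-entropy error. Letting $\varepsilon\downarrow0$ with $\delta$ fixed then gives (i).

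The hard part is precisely this transfer, and the difficulty is genuinely non-commutative. Relative entropy controls only the $\omega$-mean of the discrepancy $Y_n:=\log D_n+H_n+\log\Tr e^{-H_n}$ (indeed $\omega(Y_n)=S(\omega\vert_{A_n},\rho_n)$), whereas what is needed is control of its fluctuations, and $\omega(Y_n^2)$ is not governed by $\omega(Y_n)$ because $D_n$ and $H_n$ do not commute. Reconciling the ergodic concentration of $\frac1n H_n$ (a statement about $H_n$) with the spectral projections of $D_n$ (a statement about $\log D_n$) therefore requires the non-commutative machinery — the Peierls–Bogoliubov and Golden–Thompson inequalities and the monotonicity of relative entropy — to compare $\omega\vert_{A_n}$ and $\rho_n$ on the relevant spectral subspaces while keeping every error term $o(n)$ in the exponent. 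A secondary, more routine difficulty is the boundary mismatch between $H_n\in A_{[0,\,n-1+k]}$ and the algebra $A_n$; this, together with the passage from an abstract dual element to a local $a$, is absorbed using the finite-range commutation and the factorization of $\tau$ in Assumption \ref{nine}(i)–(ii), which render the surface corrections negligible on the scale $e^{\pm n\delta}$.
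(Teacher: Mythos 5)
Your reduction of (ii) and (iii) to the single concentration statement (i) is correct, and the first half of your attack on (i) --- the duality $s(\omega)=\inf_{a}\{P^{\tau}_{\gamma}(a)+\omega(a)+\lambda_{\tau}\}$, localization of $a$, the smallness of the per-site relative entropy to the Gibbs state $\rho_n$, and the $L^2(\omega)$-concentration of $\frac1n H_n$ via the mean ergodic theorem --- is exactly how the paper's Lemma \ref{lb} begins. But the proof stops at the point you yourself flag: the transfer of concentration from $\rho_n$ to $D_n$ is never carried out, only asserted to be achievable by ``Peierls--Bogoliubov, Golden--Thompson and monotonicity.'' That transfer \emph{is} the theorem; everything before it is preparation. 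Your own diagnosis explains why the cited inequalities cannot be enough as stated: $\frac1n S(\omega\vert_{A_n},\rho_n)<\varepsilon$ controls only the $\omega$-mean of the discrepancy between $\log D_n$ and $\log D_{\rho_n}$, and the classical way of finishing (a data-processing bound on the events $\{D_n\ge e^{n\delta}D_{\rho_n}\}$ and $\{D_n\le e^{-n\delta}D_{\rho_n}\}$) has no meaning when $D_n$ and $H_n$ do not commute --- those events do not exist, and none of the inequalities you name manufactures a substitute. So what you have is an accurate plan for half of the proof together with an honest description of the missing half, not a proof.

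The paper closes exactly this gap, and does so by abandoning the symmetric ``transfer'' picture. For the tail of too-small eigenvalues (Lemma \ref{lb}) it argues with projections rather than operators: the spectral projection $F^n$ of the ergodic average $t_n(a)$ satisfies $\omega(F^n)\to 1$ (ergodicity) and $\limsup_n \frac1n\log\tau(F^n)\le P^{\tau}_{\gamma}(a)+\omega(a)+\delta$ (positivity of relative entropy against the Gibbs state), and then a Ky Fan rearrangement inside each matrix block of $A_n$ replaces $F^n$ by a projection $Q^n$ that \emph{commutes} with $D_{\omega\vert_{A_n}}$, has the same $\tau$-measure, and no smaller $\omega$-measure. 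Since $Q^n$ commutes with $D_{\omega\vert_{A_n}}$ (and $D_{\tau\vert_{A_n}}$ is central), the bad event splits as $\omega(\,\cdot\,)\le e^{nt}\tau(Q^n)+\omega(1-Q^n)\to 0$: the noncommutativity is rearranged away instead of estimated. For the tail of too-large eigenvalues no transfer is needed at all: because $\frac1n\omega\lmk\log D_{\omega\vert_{A_n}}-\log D_{\tau\vert_{A_n}}\rmk\to -s(\omega)+\lambda_{\tau}$ by the definition of mean entropy, and the first tail is already negligible, a Markov-type pinning argument (the computation with $Q_n^{\pm}$ yielding $\limsup_n\omega(Q_n^+)\le \delta'/(\delta'+\delta)<\varepsilon$) forces the upper-deviation probability to vanish. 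To complete your route you would need to invent a device playing the role of Ky Fan's theorem --- a way to convert spectral information about the ergodic sums $H_n$ into spectral information about $D_n$ without losing the exponential bounds; as written, that key step is missing.
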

\section{Shannon-McMillan Theorem for AF $C^*$-systems}
For a self-adjoint element $a$ in a finite dimensional
$C^*$-algebra and a Borel set $I$ of $\mathbb R$,
$\Proj[a\in I]$ denotes the spectral projection of $a$ 
associated with
$I$.\\
\begin{lem}\label{lb}
For any $\gamma$-ergodic state $\omega$ and any
\begin{align}
t<-s(\omega)+\lambda_{\tau},\label{tu}
\end{align}
we have
\[
\lim_{n\to\infty}\omega\lmk\Proj\left
[\frac 1n \lmk\log D_{\omega\vert_{A_n}}-\log D_{\tau\vert_{A_n}}\rmk 
\le t\right] \rmk=0.
\]
\end{lem}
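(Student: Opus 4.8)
The plan is to show that the lower tail of the self-adjoint element $X_n:=\log D_{\omega\vert_{A_n}}-\log D_{\tau\vert_{A_n}}\in A_n$, measured in the state $\omega$, decays exponentially. First I would locate the threshold. Since $\omega(X_n)=S(\omega\vert_{A_n},\tau\vert_{A_n})$ and, by Assumption \ref{nine}(vi), $-\tfrac1n\log D_{\tau\vert_{A_n}}\to\lambda_\tau$ in norm, one gets
\[
\frac1n\,\omega(X_n)=-\frac1n S(\omega\vert_{A_n})+\frac1n\,\omega\!\left(-\log D_{\tau\vert_{A_n}}\right)\longrightarrow -s(\omega)+\lambda_\tau .
\]
Thus the hypothesis $t<-s(\omega)+\lambda_\tau$ places $t$ strictly below the asymptotic mean of $\tfrac1n X_n$, so the lemma is a one-sided large-deviation statement for this mean, and it suffices to bound $\omega(q_n)$ with $q_n:=\Proj[X_n\le nt]$.

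For the upper bound I would use a Chernoff estimate: for $s\in(0,1)$ the (commuting) operator inequality $q_n\le e^{snt}e^{-sX_n}$ gives
\[
\omega(q_n)\le e^{snt}\,\Tr\!\left(D_{\omega\vert_{A_n}}\,e^{-sX_n}\right).
\]
The non-commutativity of $D_{\omega\vert_{A_n}}$ and $D_{\tau\vert_{A_n}}$ is harmless here, precisely because (vi) makes $\log D_{\tau\vert_{A_n}}$ asymptotically scalar: $X_n$ lies within $o(n)$ in norm of $\log D_{\omega\vert_{A_n}}+n\lambda_\tau$, which \emph{commutes} with $D_{\omega\vert_{A_n}}$. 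Approximating the indicator $\mathbf 1_{(-\infty,nt]}$ by a Lipschitz function absorbs this norm error into the exponent and reduces the estimate to a spectral statement about $D_{\omega\vert_{A_n}}$ alone, namely
\[
\frac1n\log\omega(q_n)\le s\bigl(t-\lambda_\tau\bigr)+\frac1n\log\Tr\!\left(D_{\omega\vert_{A_n}}^{\,1-s}\right)+o(1).
\]

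The heart of the matter is to control the Rényi-type exponent $\tfrac1n\log\Tr(D_{\omega\vert_{A_n}}^{1-s})$. Its derivative in $s$ at $s=0$ equals $\tfrac1n S(\omega\vert_{A_n})\to s(\omega)$, but convexity only bounds it from below, so I need the upper bound
\[
\limsup_{n\to\infty}\frac1n\log\Tr\!\left(D_{\omega\vert_{A_n}}^{\,1-s}\right)\le s\,s(\omega)+o(s)\qquad(s\to0^+).
\]
This is where I expect the main obstacle, and where the thermodynamic input enters. The difficulty is that a general $\gamma$-ergodic $\omega$ is not an equilibrium state, so $\log D_{\omega\vert_{A_n}}$ is not a local Hamiltonian $\sum_{i=0}^{n-1}\gamma_i(a)$ and the exponent cannot be read off the pressure directly. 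I would instead exploit the dual form of the variational principle (Proposition \ref{vp}),
\[
s(\omega)=\inf_{a\in A_{\rm sa}}\Bigl\{P_{\gamma}^{\tau}(a)+\lambda_{\tau}+\omega(a)\Bigr\},
\]
choosing for each $\varepsilon>0$ a potential $a$ nearly attaining the infimum, comparing $\Tr(D_{\omega\vert_{A_n}}^{1-s})$ with the partition function $\Tr(e^{-(1-s)\sum_{i}\gamma_i(a)})$ through the Gibbs state of $a$, and using the affinity and weak$^*$ upper semicontinuity of $s$ (Proposition \ref{me}) together with ergodicity of $\omega$ to pass to the thermodynamic limit before letting $\varepsilon\to0$ and $s\to0$.

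Granting this bound, the estimate of the second paragraph becomes
\[
\limsup_{n\to\infty}\frac1n\log\omega(q_n)\le -s\bigl((-s(\omega)+\lambda_\tau)-t\bigr)+o(s),
\]
and since $t<-s(\omega)+\lambda_\tau$ the bracket is a positive constant; choosing $s>0$ small makes the leading term strictly negative, so $\omega(q_n)$ decays exponentially and in particular tends to $0$. The routine parts are the Chernoff inequality and the Lipschitz smoothing; the substantive step, and the one I would spend effort on, is the variational-principle bound on the entropy exponent, where the hypotheses that $\omega$ is ergodic and that the system admits thermodynamic formalism are genuinely used.
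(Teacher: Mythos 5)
Your reduction of the lemma to the R\'enyi-type bound
\[
\limsup_{n\to\infty}\tfrac1n\log\Tr\bigl(D_{\omega\vert_{A_n}}^{\,1-s}\bigr)\le s\,s(\omega)+o(s)\qquad(s\to0^+)
\]
is where the proof breaks down: this is not merely the hard step you flagged, it is \emph{false} for general ergodic states, so no use of the variational principle can establish it. The problem is already classical, and classical ergodic processes embed in the present setting (take $A=M_d^{\otimes\mathbb Z}$, $\tau$ the normalized trace, and $\omega$ the diagonal state induced by an ergodic measure, so that $D_{\omega\vert_{A_n}}$ has eigenvalues $p_n(x)$). For any $c$ one has
\[
\Tr\bigl(D_{\omega\vert_{A_n}}^{\,1-s}\bigr)=\sum_x p_n(x)\,p_n(x)^{-s}\;\ge\; \omega\bigl(-\tfrac1n\log p_n\ge c\bigr)\,e^{snc},
\]
so your bound would force $\limsup_n\frac1n\log\omega\bigl(-\tfrac1n\log p_n\ge c\bigr)\le -s\bigl(c-s(\omega)\bigr)+o(s)<0$ for every $c>s(\omega)$, i.e.\ \emph{exponential} decay of exactly the tail probability the lemma concerns. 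Indeed your Chernoff route, if it worked, would prove exponential decay of $\omega(q_n)$, a strictly stronger conclusion than the lemma. But under ergodicity alone this tail can decay arbitrarily slowly: standard cutting-and-stacking constructions (see Shields' book on ergodic theory of discrete sample paths) produce ergodic processes in which a slowly vanishing fraction of the mass sits on $n$-blocks of probability about $e^{-cn}$ with $c>s(\omega)$. This also explains why your sketch for proving the bound cannot be completed: H\"older-type comparisons with a Gibbs partition function produce $\lV a\rV$ where you need $\omega(a)$, and if instead you cut the spectrum along the ergodic-average projection $F^n=\Proj[\frac1n\sum_i\gamma_i(a)\approx\omega(a)]$, the complement $1-F^n$ has exponentially large rank while the mean ergodic theorem gives $\omega(1-F^n)\to0$ with \emph{no rate}, so its contribution to the full trace is uncontrollable.

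The paper's proof is built precisely to avoid ever touching the full trace $\Tr(D_{\omega\vert_{A_n}}^{1-s})$. It chooses $a$ from the dual variational formula with $t<-(P_\gamma^\tau(a)+\omega(a))$, bounds $\tau(F^n)$ exponentially by positivity of relative entropy (this much parallels your use of the pressure), and then inserts the step your argument lacks: by Ky Fan's theorem it replaces $F^n$ by a projection $Q^n$ \emph{commuting with} $D_{\omega\vert_{A_n}}$, with $\tau(Q^n)=\tau(F^n)$ and $\omega(Q^n)\ge\omega(F^n)$. Splitting the tail projection against $Q^n$ and $1-Q^n$, the first term is at most $e^{nt}\tau(Q^n)$, which decays exponentially since $t+P_\gamma^\tau(a)+\omega(a)+\delta<0$, while the second is at most $\omega(1-Q^n)\le\omega(1-F^n)\to0$, for which mere convergence from the mean ergodic theorem suffices. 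In other words, the exponential estimate is confined to the part of the spectrum where it is actually available, and the non-exponential remainder is carried by a term that only needs to tend to zero. Any repair of your approach must similarly truncate before exponentiating; an untruncated Chernoff bound cannot prove this lemma.
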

\begin{proof}
By Proposition \ref{vp} 
and concavity and upper semi-continuity 
of the mean entropy (Proposition \ref{me}), we have
\begin{align*}
s(\omega)=\inf_{a\in A_{sa}}\left\{
P_{\gamma}^{\tau}(a)+\omega(a)+\lambda_{\tau}\right\}.
\end{align*}
(See Theorem 3.12 of \cite{rueb}.)
Therefore, (\ref{tu}) means there exists a self-adjoint element $a\in A$
such that
\begin{align*}
t<-\lmk P_{\gamma}^{\tau}(a)+\omega(a)\rmk.
\end{align*}
Since the right hand side is continuous with respect 
to the norm of $a$ and invariant under the translation 
$a\to \gamma_i(a)$, we may assume $a$ is in $A_l$ for some
$l\in {\mathbb N}$.
For each $l\le n\in {\mathbb N}$, we set
$t_{n}(a):=\sum_{i=0}^{n-l}\gamma_i(a)/n\in A_n$.

Take $0<\delta<-t-\lmk P_{\gamma}^{\tau}(a)+\omega(a)\rmk$, 
and define
\[
F^n:=\Proj\left[
t_{n}(a)\in \lmk \omega(a)-\delta,\omega(a)+\delta\rmk
\right]\in A_n
\]
for each $n\in{\mathbb N}$ with $l\le n$.

By the ergodicity of $\omega$, we have
\begin{align}\label{oe}
\lim_{n\to\infty}\omega\lmk F^n\rmk=1.
\end{align}
(See Theorem 4.3.17 of \cite{BR86}.)
We claim
\begin{align}\label{rk}
\limsup_{n\to\infty}\frac 1n\log\tau\lmk
F^n\rmk\le P_{\gamma}^{\tau}(a)+\omega(a)+\delta.
\end{align}
This is trivial if $F^n=0$ eventually.
By taking a subsequence if necessary,
 we may assume $F^n\neq 0$ for all $n$.
Then $\tau(F^n\; \cdot\;)/\tau(F^n)\vert_{A_n}$ and
$\tau(e^{-nt_n(a)}\; \cdot\;)/\tau(e^{-nt_n(a)})\vert_{A_n}$
are states on $A_n$.
By the positivity of relative entropy, we have
\begin{align*}
0&\le
S\left(\frac{\tau(F^n\cdot)}{\tau(F^n)}\vert_{A_n},
\frac{\tau(e^{-nt_n(a)}\cdot)}{\tau(e^{-nt_n(a)})}\vert_{A_n}\right)\\
&=-\log\tau(F^n)+n\frac{\tau\lmk F^nt_n(a)\rmk}{\tau(F^n)}
+\log\tau\lmk e^{-n t_{n}(a)}\rmk\\
&\le
-\log\tau(F^n)+n\lmk \omega(a)+\delta\rmk
+\log\tau\lmk e^{-n t_{n}(a)}\rmk.
\end{align*}
This immediately implies (\ref{rk}).

Next for each $n$, we construct a projection $Q^n$
commuting with the density matrix $D_{\omega\vert_{A_n}}$
and satisfying $\tau(Q^n)=\tau(F^n)$ and $\omega(F^n)\le\omega(Q^n)$.
Write $A_n$ in the form
\[
A_n=\bigoplus_{k=1}^{N_n}M^n_k,
\]
where $M^n_k$ is isomorphic to a matrix algebra 
$\Mat_{m_{k}^n}\lmk{\mathbb C}\rmk$.
Let $z_k^n$ be the central projection in $A_n$ such that 
$M^n_k=A_nz^n_k$.
Since $\tau$ is trace, the density matrix $D_{\tau\vert_{A_n}}$ 
of $\tau\vert_{A_n}$ has the from
\[
D_{\tau\vert_{A_n}}=\bigoplus_{k=1}^{N_n}\lambda_k^n z_k^n
\]
with $0\le \lambda_k^n\le 1$.
Write $D_{\omega\vert_{A_n}}$ and $F^n$ as
\[
D_{\omega\vert_{A_n}}=\bigoplus_{k=1}^{N_n} D_k^n,\quad
F^n=\bigoplus_{k=1}^{N_n}F_k^n
\]
with positive elements $D_k^n\in {M^n_k}$ and projections
$F_k^n\in {M^n_k}$.

For each $k$, write $D_k^n$ in the form
$
D_k^n=\sum_{i=1}^{m_{k}^n}\beta_{i,k}^nq_{i,k}^n
$
where $\beta_{i,k}^n\ge 0$ and $q_{i,k}^n$ are mutually
orthogonal minimal projections in $A_n$ with 
$\sum_{i=1}^{m_{k}^n}q_{i,k}^n=z_{k}^n$.
We may assume $\beta_{1,k}^n\ge \beta_{i,k}^n\ge\cdots\ge \beta_{m_k^n,k}^n$.
Set $Q_k^n:=\sum_{i=1}^{\Tr_{M_k^n}F_k^n}q_{i,k}^n$.
Then, by Ky Fan's Theorem, 
\[\Tr_{M_k^n}(D_k^nQ_k^n)\ge\Tr_{M_k^n}(D_k^nF_k^n),\quad
\Tr_{M_k^n}Q_k^n=\Tr_{M_k^n}F_k^n.
\] 
The projection $Q^n:=\bigoplus_{k=1}^{N_n}Q_k^n$ satisfies the
required condition.

To complete the proof, note that $Q^n$ and $\Proj\left
[\frac 1n \lmk\log D_{\omega\vert_{A_n}}-\log D_{\tau\vert_{A_n}}\rmk 
\le t\right]$ commute.
Therefore, we have
\begin{align*}
&\omega\lmk\Proj\left
[\frac 1n \lmk\log D_{\omega\vert_{A_n}}-\log D_{\tau\vert_{A_n}}\rmk 
\le t\right]
\rmk\\
&=\omega\lmk\Proj\left
[\frac 1n \lmk\log D_{\omega\vert_{A_n}}-\log D_{\tau\vert_{A_n}}\rmk 
\le t\right]Q^n
\rmk\\
&+
\omega\lmk\Proj\left
[\frac 1n \lmk\log D_{\omega\vert_{A_n}}-\log D_{\tau\vert_{A_n}}\rmk 
\le t\right]\lmk 1-Q^n\rmk
\rmk\\
&\le e^{nt}\tau(Q^n)+\omega\lmk1-Q^n\rmk.
\end{align*}
It suffices to show that each term on the right hand side converges to
$0$ as $n\to\infty$. 
From $\tau(Q^n)=\tau(F^n)$ and (\ref{rk}), we have the exponentially fast decay
of the first term:
\begin{align*}
&\limsup_{n\to\infty}\frac 1n \log e^{nt}\tau(Q^n)
=t+\limsup_{n\to\infty}\frac 1n \log \tau(Q^n)\\
&=t+\limsup_{n\to\infty}\frac 1n \log \tau(F^n)
\le
t+P_{\gamma}^{\tau}(a)+\omega(a)+\delta<0.
\end{align*}
Furthermore, $\omega(F^n)\le\omega(Q^n)$ and
(\ref{oe}) implies $\lim_{n\to\infty}\omega\lmk1-Q^n\rmk=0$.
\end{proof}
\begin{proofof}[Theorem \ref{main}]
From the condition (vi) in Assumption \ref{nine},
it suffices to show that for any $\delta>0$,
\begin{align}\label{mid}
\lim_{n\to\infty}\omega
\lmk
\Proj
\left[
\frac 1n  \lmk\log
D_{\omega\vert_{A_n}}-\log D_{\tau\vert_{A_n}}\rmk\in
\lmk-s(\omega)+\lambda_{\tau}-\delta,-s(\omega)
+\lambda_{\tau}+\delta\rmk
\right]
\rmk=1.
\end{align}
Theorem \ref{main} follows from this with
\begin{align*}
p_n:=&\Proj
\left[
\frac 1n  \lmk\log
D_{\omega\vert_{A_n}}-\log D_{\tau\vert_{A_n}}\rmk\in
\lmk-s(\omega)+\lambda_{\tau}-\delta/3,-s(\omega)
+\lambda_{\tau}+\delta/3\rmk
\right]\\
&\cdot\Proj
\left[
\frac 1n  \log D_{\tau\vert_{A_n}}\in
\lmk-\lambda_{\tau}-\delta/3,-\lambda_{\tau}+\delta/3\rmk
\right].
\end{align*}
As we already have 
Lemma \ref{lb}, in order to show (\ref{mid}), it suffices to show that
for any $\delta>0$,
\begin{align*}
\lim_{n\to\infty}\omega
\lmk
\Proj
\left[
\frac 1n \log \lmk
D_{\omega\vert_{A_n}}-\log D_{\tau\vert_{A_n}}\rmk\ge
-s(\omega)
+\lambda_{\tau}+\delta
\right]
\rmk=0.
\end{align*}
Fix $\varepsilon>0$ and choose 
$\delta'>0$ with $\delta'/(\delta'+\delta)<\varepsilon$.
Set 
\begin{align*}
Q_{n}^-&:=\Proj\lcm\frac 1n\lmk
\log D_{\omega\vert_{A_n}}-\log D_{\tau\vert_{A_n}}\rmk
\le -s(\omega)
+\lambda_{\tau}-\delta'\rcm,\\
Q_{n}^+&:=\Proj\lcm\frac 1n\lmk
\log D_{\omega\vert_{A_n}}-\log D_{\tau\vert_{A_n}}\rmk
\ge -s(\omega)
+\lambda_{\tau}+\delta\rcm.
\end{align*}
By the positivity of relative entropy 
$S\lmk\omega(Q_{n}^-\; \cdot\;)/\omega(Q_{n}^-)\vert_{A_n},
\tau(Q_{n}^-\; \cdot\;)/{\tau(Q_{n}^-)}\vert_{A_n}\rmk$
and the condition (vi) in Assumption (\ref{nine}) 
we have
\begin{align*}
\frac{1}{n}\omega(Q_{n}^-)\log \omega(Q_{n}^-)
\le \frac{1}{n}\omega\lmk
Q_{n}^-\lmk
\log D_{\omega\vert_{A_n}}-\log D_{\tau\vert_{A_n}}
\rmk
\rmk\le
C\omega(Q_{n}^-),
\end{align*}
for $n$ large enough, with some positive constant $C$.
As we have $\lim_{n\to\infty}\omega(Q_{n}^-)=0$ 
from Lemma \ref{lb}, these inequalities implies
\begin{align}\label{nc}
\lim_{n\to\infty}
\frac{1}{n}\omega\lmk
Q_{n}^-
\lmk
\log D_{\omega\vert_{A_n}}-\log D_{\tau\vert_{A_n}}
\rmk\rmk=0.
\end{align}
By the definitions of $Q_{n}^-$, $Q_{n}^+$, we have
\begin{align*}
&\omega\lmk
\frac 1n \lmk 
\log D_{\omega\vert_{A_n}}-\log D_{\tau\vert_{A_n}}
\rmk
\rmk\\
&\ge
\frac 1n\omega\lmk Q_{n}^-\lmk
\log D_{\omega\vert_{A_n}}-\log D_{\tau\vert_{A_n}}
\rmk
\rmk\\
&\;\;\;\;+\lmk
\omega(Q_{n}^-)-1
\rmk\lmk s(\omega)
-\lambda_{\tau}+\delta'\rmk
+(\delta+\delta')\omega(Q_{n}^+).
\end{align*}
Taking $n\to\infty$ limit, we obtain
\begin{align*}
\limsup_n\omega(Q_{n}^+)\le\frac{\delta'}{\delta'+\delta}<\varepsilon.
\end{align*}
\end{proofof}
\section{Variational principle}\label{mevp}
In this section, we give a proof of Proposition \ref{me} 
and \ref{vp}. Most of the arguments are analogous to
those in quantum spin systems.
By the condition (vi) of Assumption \ref{nine}, in order to prove
Proposition \ref{me}
it suffices to consider
\[
-\frac1n S(\omega\vert_{A_n},\tau\vert_{A_n})
=\frac1n S(\omega\vert_{A_n})+\frac1n\omega(\log D_{\tau\vert_{A_n}})
\]
instead of $\frac1n S(\omega\vert_{A_n})$.

Fix some $m\in\nan$. 
Set $I_{m,j}:=[(m+n_0)j,(m+n_0)(j+1)-n_0-1]$ for each $j\in {\zin}$.
From (i) of Assumption \ref{nine},
there exists a $*$-homomorphism 
$\pi_m:\bigotimes_{\zin}A_m\to
\bigvee_{j\in\zin}A_{I_{m,j}}$ with
$\pi_m\vert_{\bigotimes\limits_{j=k}^lA_m}
(\bigotimes\limits_{j=k}^lb_j)=
\prod\limits_{j=k}^l\gamma^{(m+n_0)j}(b_j)$,
$b_j\in A_m$ for $j=k,\ldots,l$, $k\le l$.
Furthermore, (ii) of Assumption \ref{nine} implies
$\tau\circ\pi_m=\bigotimes_{\zin}\tau\vert_{A_m}$.
From this $\pi_m$ is $*$-isomorphism.
\begin{proofof}[Proposition \ref{me}]
For each $n\ge m+n_0+1$, we have
\begin{align*}
&-S\lmk
\left.\omega\right
\vert_{A_n},\left.\tau\right\vert_{A_n}\rmk
\le
-S\lmk
\left.\omega\right
\vert_{\bigvee\limits_{j=0}^{[\frac{n}{m+n_0}]-1}
A_{I_{m,j}}},\left.\tau\right\vert_{\bigvee\limits_{j=0}^{[\frac{n}{m+n_0}]-1}
A_{I_{m,j}}}\rmk\\
&=-S\lmk
\left.\omega\circ\pi_m\right
\vert_{\bigotimes\limits_{j=0}^{[\frac{n}{m+n_0}]-1}A_{m}}
,
\left.\tau\circ\pi_m
\right
\vert_{\bigotimes\limits_{j=0}^{[\frac{n}{m+n_0}]-1}A_m}\rmk
\le-
\left[
\frac{n}{m+n_0}
\right]S\lmk
\left.\omega\right
\vert_{A_m},\left.\tau
\right\vert_{A_m}\rmk.
\end{align*}
We used monotonicity of relative entropy in the first inequality and
the subadditivity of von Neumann entropy in the 
second inequality.
This inequality and Lemma 1.1.2 of \cite{nsb} imply
the existence of the limit 
$\lim_{n\to\infty}-\frac1n S(\omega\vert_{A_n},\tau\vert_{A_n})
=\inf_m -S(\omega\vert_{A_m},\tau\vert_{A_m})/(m+n_0)$.
The rest of the proposition is standard. 
We leave it to the reader to check it.
(See \cite{BR96}, \cite{hp}.)
\end{proofof}
\begin{lem}
For any self-adjoint element
 $a$ in $A$, the limit
\[
P_{\gamma}^{\tau}(a):=\lim_{n\to\infty} \frac 1n \log 
\tau(e^{-\sum_{i=0}^{n-1}\gamma_i(a)})\] exists.
The function $P_{\gamma}^{\tau}$ is continuous with respect to the 
norm topology on $A_{sa}$.
\end{lem}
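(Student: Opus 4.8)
The plan is to establish existence of the limit
\[
P_{\gamma}^{\tau}(a)=\lim_{n\to\infty}\frac 1n\log\tau\lmk e^{-\sum_{i=0}^{n-1}\gamma_i(a)}\rmk
\]
by a subadditivity argument, mirroring the proof of Proposition \ref{me}, and then to deduce norm continuity from a direct Lipschitz estimate. First I would set $W_n:=\sum_{i=0}^{n-1}\gamma_i(a)$ and $\log Z_n:=\log\tau\lmk e^{-W_n}\rmk$, and try to show that the sequence $\log Z_n$ is (almost) superadditive. The natural route is to use the block decomposition already introduced for the entropy: with $I_{m,j}:=[(m+n_0)j,(m+n_0)(j+1)-n_0-1]$ and the $*$-isomorphism $\pi_m$, the partial sums over disjoint blocks become commuting, and conditions (i) and (ii) of Assumption \ref{nine} let the trace factorize across blocks separated by the gap of width $n_0$. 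The main estimate I would aim for is that truncating $W_n$ by discarding the boundary terms in the gaps changes $\log Z_n$ by at most a fixed multiple of $\nn{a}$ per block, so that $\log Z_n$ differs from a genuinely superadditive quantity by an error of order $O\lmk n\nn a/(m+n_0)\rmk$ that becomes negligible after dividing by $n$ and sending $m\to\infty$. Combining superadditivity (via Lemma 1.1.2 of \cite{nsb}, exactly as invoked for the mean entropy) with the uniform bound $\lv\frac 1n\log Z_n\rv\le \nn a+\lambda_\tau$ then yields convergence.

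For norm continuity I would argue directly at finite $n$ rather than through the limit. Given self-adjoint $a,b\in A$, I would use the Peierls--Bogoliubov (or Golden--Thompson) type inequality together with the elementary operator bound $\nn{\sum_{i=0}^{n-1}\gamma_i(a-b)}\le n\nn{a-b}$, and the fact that $\gamma$ is $\tau$-preserving, to obtain
\[
\lv\frac 1n\log\tau\lmk e^{-W_n(a)}\rmk-\frac 1n\log\tau\lmk e^{-W_n(b)}\rmk\rv\le\nn{a-b}
\]
uniformly in $n$. Passing to the limit then gives $\lv P_{\gamma}^{\tau}(a)-P_{\gamma}^{\tau}(b)\rv\le\nn{a-b}$, so $P_{\gamma}^{\tau}$ is $1$-Lipschitz on $A_{\mathrm{sa}}$, which is stronger than the asserted continuity.

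I expect the main obstacle to be the superadditivity step, specifically controlling the error introduced by the gap regions of width $n_0$. Unlike the entropy case, where monotonicity and subadditivity of von Neumann entropy gave clean one-sided inequalities, here I must handle the exponential of a sum whose terms near the block boundaries straddle the gaps and need not commute with the bulk. The cleanest way I foresee is to write $W_n$ as the sum of a block part (supported on the $A_{I_{m,j}}$, hence with commuting, trace-factorizing blocks) plus a boundary remainder bounded in norm by $n_0\nn a$ per block, and to absorb this remainder using the Peierls--Bogoliubov inequality twice to sandwich $\log Z_n$. Making this sandwich precise enough that the error is genuinely $o(n)$ after the double limit $n\to\infty$ then $m\to\infty$ is the delicate part; everything else is the same machinery already used for Proposition \ref{me}.
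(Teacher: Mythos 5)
Your proposal is correct and takes essentially the same route as the paper: the paper also localizes $a$, derives an approximate block subadditivity for $\log\tau\lmk e^{-\sum_{i=0}^{n-1}\gamma_i(a)}\rmk$ using conditions (i), (ii) of Assumption \ref{nine} together with the Peierls--Bogoliubov-type perturbation inequality (Corollary 2.3.11 of \cite{nsb}, which is exactly the estimate behind both your boundary-discarding step and your Lipschitz bound), concludes existence via Lemma 1.1.2 of \cite{nsb}, and gets norm continuity from that same perturbation inequality. The only step you should state explicitly is the initial reduction to local $a\in A_l$, which your block decomposition tacitly requires and which your own uniform-in-$n$ Lipschitz estimate (combined with density of $\bigcup_n A_{[-n,n]}$ in $A$) justifies, just as the paper does.
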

\begin{proof}
From Corollary 2.3.11 of \cite{nsb}, 
we may assume that $a\in A_l$ for some $l\in\nan$.
Fix some $m\in\nan$, with $l\le m$. Then from the condition (i), (ii)
of Assumption \ref{nine}, and again Corollary 2.3.11 of \cite{nsb},
we have
\begin{align*}
&\log\tau\lmk e^{-\sum_{i=0}^{n-1}\gamma_i(a)}\rmk\\
&\le
\left[\frac{n}{m+n_0}\right]\log
\tau\lmk e^{-\sum_{i=0}^{m-l}\gamma_i(a)}\rmk
+\left[\frac{n}{m+n_0}\right]\lmk
n_0+l-1\rmk\lV a\rV+\lmk m+n_0\rmk\lV a\rV,
\end{align*}
for all $m+n_0\le n\in\nan$.
This inequality and Lemma 1.1.2 of \cite{nsb} implies
the existence of 
$P_{\gamma}^{\tau}(a)$. Continuity of $P_{\gamma}^{\tau}$
follows from Corollary 2.3.11 of \cite{nsb}.
(See \cite{hp}.)
\end{proof}
\begin{proofof}[Proposition \ref{vp}]
The inequality
\[
P_{\gamma}^{\tau}(a)
\ge\sup_{\omega\in {\cal S}_{\gamma}}\left\{
s(\omega)-\omega(a)\right\}-\lambda_{\tau}
\]
follows from positivity of relative entropy
$S(\omega\vert_{A_n}, 
\tau(e^{-\sum_{i=0}^{n-l}\gamma_i(a)}\;\cdot\;)/
\tau(e^{-\sum_{i=0}^{n-l}\gamma_i(a)})\vert_{A_n})$.

Since $P_{\gamma}^{\tau}$ is continuous, to prove the converse
inequality we may assume $a\in A_l$ for some $l\in\nan$.
Fix $l\le m\in\nan$.
From the condition (i), (ii) of Assumption \ref{nine},
there exists a $\gamma_{m+n_0}$-invariant state 
$\varphi_{m,a}$
given by
\[
\varphi_{m,a}=\wks\lim_{N\to\infty}
\frac{\tau\lmk e^{-\sum_{j=-N}^N\sum_{i=0}^{m-l}
\gamma_{j(m+n_0)+i}(a)}
\;\cdot\;\rmk}
{\tau\lmk e^{-\sum_{j=-N}^N\sum_{i=0}^{m-l}
\gamma_{j(m+n_0)+i}(a)}
\rmk}.
\]

The restriction of $\varphi_{m,a}$ to 
$A_{[k(m+n_0), n(m+n_0)+m]}$ with $k\le n$, $k,n\in\zin$
is
\[
\left.\varphi_{m,a}\right\vert_{A_{[k(m+n_0), n(m+n_0)+m]}}
=\left.
\frac{\tau\lmk e^{-\sum_{j=k}^n\sum_{i=0}^{m-l}
\gamma_{j(m+n_0)+i}(a)}
\;\cdot\;\rmk}
{\tau\lmk e^{-\sum_{j=k}^n\sum_{i=0}^{m-l}
\gamma_{j(m+n_0)+i}(a)}
\rmk}\right\vert_{A_{[k(m+n_0), n(m+n_0)+m]}}.
\]
By monotonicity and positivity of relative entropy,
we have
\begin{align*}
0
&\le
S\lmk
\left.\varphi_{m,a}\circ
\gamma_i\right\vert_{A[0,n(m+n_0)-1]}, 
\left.\frac{\tau\lmk
e^{-\sum_{k=0}^{n(m+n_0)-l-1}\gamma_k(a)}\;\cdot\;\rmk}
{\tau\lmk
e^{-\sum_{k=0}^{n(m+n_0)-l-1}\gamma_k(a)}\rmk}
\right\vert_{A_{[0,n(m+n_0)-1]}}
\rmk\\
&\le
S\lmk
\left.\varphi_{m,a}\circ
\gamma_i\right\vert_{A[-i,(n+1)(m+n_0)+m-i]}, 
\left.\frac{\tau\lmk
e^{-\sum_{k=0}^{n(m+n_0)-l-1}\gamma_k(a)}\;\cdot\;\rmk}
{\tau\lmk
e^{-\sum_{k=0}^{n(m+n_0)-l-1}\gamma_k(a)}\rmk}
\right\vert_{A_{[-i,(n+1)(m+n_0)+m-i]}}
\rmk,
\end{align*}
for each $i=0,\ldots,m+n_0-1$ and $n\in\nan$.
From this we get
\begin{align}\label{sw}
0&\le
-S(\left. \varphi_{m,a}\circ\gamma_i\right
\vert_{A_{n(m+n_0)}})
-\varphi_{m,a}\circ\gamma_i
\lmk \log D_{\tau\vert_{A_{n(m+n_0)}}}\rmk\nonumber\\
&+\log\tau\lmk e^{-\sum_{k=0}^{n(m+n_0)-l-1}\gamma_k(a)}
\rmk
+\varphi_{m,a}\lmk
\sum_{k=i}^{n(m+n_0)-l+i-1}\gamma_k(a)\rmk\nonumber\\
&\le
2\lV a\rV \lmk
4(m+n_0)+n(n_0+l)\rmk,
\end{align}
for each $i=0,\ldots,m+n_0-1$ and $n\in\nan$.

Note that for a $\gamma_{m+n_0}$-invariant state $\varphi$,
the mean entropy with respect to  $\gamma_{m+n_0}$ exists:
\[
s^{m+n_0}(\varphi):=\lim_{n\to\infty} 
\frac{1}{n}S\lmk\varphi\vert_{A_{n(m+n_0)}}\rmk.
\]
This $s^{m+n_0}$ is also 
affine on the set of all $\gamma_{m+n_0}$-invariant states. 
If $\varphi$ is $\gamma$-invariant, we have 
$s^{m+n_0}(\varphi)=(m+n_0)s(\varphi)$.

From (\ref{sw}), we have
\begin{align}\label{ii}
s^{m+n_0}\lmk\varphi_{m,a}\circ\gamma_i\rmk-(m+n_0)\lambda_{\tau}
\ge
(m+n_0)P_{\gamma}^{\tau}(a)+\sum_{k=0}^{m+n_0-1}
\varphi_{m,a}\circ\gamma_k(a)-2\lV a\rV(n_0+l).
\end{align}

Define a $\gamma$-invariant state $\psi_{m,a}$ by
$\psi_{m,a}:=\lmk 
\sum_{i=0}^{m+n_0-1}\varphi_{m,a}\circ\gamma_i\rmk/(m+n_0)$.
As $s^{m+n_0}$ is affine and $\psi_{m,a}$ is $\gamma$-invariant,
(\ref{ii}) implies
\begin{align*}
s\lmk\psi_{m,a}\rmk&=\frac{1}{m+n_0}s^{m+n_0}\lmk\psi_{m,a}\rmk
=\frac{1}{(m+n_0)^2}\sum_{i=0}^{m+n_0-1}
s^{m+n_0}\lmk\varphi_{m,a}\circ\gamma_i\rmk\\
&\ge
P_{\gamma}^{\tau}(a)+\lambda_{\tau}
+\psi_{m,a}(a)-\frac{2\lV a\rV(n_0+l)}{m+n_0}.
\end{align*}
Hence we obtain the converse inequality
\[
P_{\gamma}^{\tau}(a)
\le\sup_{\omega\in {\cal S}_{\gamma}}\left\{
s(\omega)-\omega(a)\right\}-\lambda_{\tau}.
\]

\end{proofof}

\end{document}